\theoremstyle{plain}
\newtheorem{thm}{Theorem}  
\newtheorem{cor}{Corollary} \newtheorem{lem}{Lemma}
\newtheorem{prop}{Proposition}  
\def\ket{\rangle}
\def\be{\begin{eqnarray}}
\def\ee{\end{eqnarray}}
\def\bee{\begin{eqnarray*}}
\def\eee{\end{eqnarray*}}
\def\ot{\otimes}
\def\tr{\hbox{Tr}}
\def\lnrm{\left \vert \left \vert}
\def\rnrm{\right \vert \right \vert}
\begin{document}

\title{Tight bounds on the distinguishability of quantum states under separable
measurements}

\author{Somshubhro Bandyopadhyay%
\thanks{Department of Physics and Center for Astroparticle Physics and Space
Science, Bose Institute, Block EN, Sector V, Bidhan Nagar, Kolkata
700091, India; Email: som@jcbose.ac.in%
} $\;\;$and Michael Nathanson%
\thanks{Department of Mathematics and Computer Science, Saint Mary's College
of California, Moraga, CA, 94556, USA; Email: man6@stmarys-ca.edu %
}}
\maketitle
\begin{abstract}
One of the many interesting features of quantum nonlocality is that
the states of a multipartite quantum system cannot always be distinguished
as well by local measurements as they can when all quantum measurements
are allowed. In this work, we characterize the distinguishability
of sets of multipartite quantum states when restricted to separable
measurements, those which contain the class of local measurements
but nevertheless are free of entanglement between the component systems.
We consider two quantities: the separable fidelity, a truly quantum
quantity, which measures how well we can {}``clone'' the input state,
and the classical probability of success, which simply gives the optimal
probability of identifying the state correctly.

We obtain lower and upper bounds on the separable fidelity and give
several examples in the bipartite and multipartite settings where
these bounds are optimal. Moreover the optimal values in these cases
can be attained by local measurements. We further show that for distinguishing
orthogonal states under separable measurements, a strategy that maximizes
the probability of success is also optimal for separable fidelity.
We point out that the equality of fidelity and success probability
does not depend on an using optimal strategy, only on the orthogonality
of the states. To illustrate this, we present an example where two
sets (one consisting of orthogonal states, and the other non-orthogonal
states) are shown to have the same separable fidelity even though
the success probabilities are different. 
\end{abstract}

\section{Introduction}

Suppose a composite quantum system is known to be in one of many states,
not necessarily orthogonal, such that its parts are distributed among
spatially separated observers. The goal is to learn about the state
of the system using only local quantum operations and classical communication
between the parties (LOCC). This problem, known as local state discrimination,
is of considerable interest \cite{Walgate-2000,Walgate-2002,Peres-Wootters-1991,Bennett-I-99,fan-2005,Horodecki-2003,Nathanson-2005,Watrous-2005,Duan2007,Qi-Duan,Bandyo-2011},
as in many instances the information obtainable by LOCC is strictly
less than that achieved with global measurements \cite{Bennett-I-99,Ghosh-2001,Ghosh-2002,Ghosh-2004}.
This gives rise to a new kind of nonlocality \cite{Bennett-I-99,Horodecki-2003,Bandyo-2011},
conceptually different from that captured through the violation of
Bell inequalities \cite{Bell-1964,CHSH}. Thus the problem of local
state discrimination and the phenomenon of nonlocality serve to explore
fundamental questions related to local access of global information
\cite{Peres-Wootters-1991,Ghosh-accessible information,Horodecki-accessible-information},
and the relationship between entanglement and local distinguishability
\cite{Bennett-I-99,Bandyo-2010,Horodecki-2003,Hayashi-etal-2006}.
Moreover, it has found novel applications such as data hiding \cite{Terhal2001,DiVincenzo2002,Eggeling2002}
and secret sharing \cite{Markham-Sanders-2008}.

There are many celebrated results identifying sets of states for which
perfect local discrimination is possible and sets for which it is
not. In particular, any two pure states can be optimally distinguished
with LOCC \cite{Walgate-2000,Virmani-2001} but no more than $d$
maximally entangled states on $\mathbb{C}^{d}\otimes\mathbb{C}^{d}$
can be \cite{Nathanson-2005,Cosentino-2012}. A complete basis of
a composite space which can be distinguished with separable measurements
must be a product basis but this condition is not sufficient in general
\cite{Bennett-I-99,Ghosh-2001,Horodecki-2003}. Finally, sometimes
increasing the average entanglement in a set can enable state discrimination
\cite{Horodecki-2003}. More recent studies include distinguishing
states (pure or mixed) when many copies are provided \cite{Calsamiglia,Bandyo-2011,Yu-Duan,Nathanson-2010,MatthewsWehnerWinter}.

The class of LOCC measurements does not have a simple mathematical
characterization, and optimization is often analytically intractable.
In this paper, we will focus on the class of separable measurements,
those which are free of entanglement between the component systems.
These comprise a strict superset of LOCC measurements and are much
more amenable to analytic results (as in \cite{Duan-2009,Navascu\'es}).
It should be noted, however, that while every LOCC protocol can be
realized by a rank-one separable measurement, the converse is known
not to be true \cite{Bennett-I-99,Bennett-II-99 +Divin-2003}.

The focus of this paper is in quantifying imperfect local discrimination,
a question which has been settled in the case of a pair nonorthogonal
pure states \cite{Virmani-2001} but has generally not been explored
as deeply. In \cite{Nathanson-2005} bounds on the error probability
in distinguishing bipartite orthogonal states were obtained, and in
\cite{Cosentino-2012} upper bounds on the maximum probability of
perfect local discrimination were derived for special sets of maximally
entangled states. In a different approach, a complementary relation
between locally accessible information and final average entanglement
was observed \cite{Ghosh-accessible information,Horodecki-accessible-information}
which provides upper bounds on the locally accessible information
and is known to be optimal for some classes of states. Other approaches
used measurements with positive partial transpose \cite{Cosentino-2012,Yu-Duan};
the set of such measurements contains the separable ones as a strict
subset.

We will use two measures of distinguishability, the average fidelity
and the success probability. The notion of \emph{average fidelity}
was first considered by Fuchs and Sasaki to measure the {}``quantumness''
of a set of states \cite{Fuchs=Sasaki,Fuchs}. The authors imagine
a quantum source which emits a quantum state from $S=\left\{ p_{i},|\psi_{i}\rangle\right\} $,
headed towards a receiver. The state is intercepted by Eve, who performs
a complete measurement on the state. She sends the results of her
measurement to her partner Yves via a classical channel; he uses this
classical information to construct a state $\vert\phi_{a}\rangle$
(here, the subscript $a$ indicates the measurement outcome of Eve)
which is sent on to the original intended receiver. The average fidelity
measures the probability that the activity of Eve and Yves will not
be detected. (By contrast, the probability of correct state identification
measures Yves's ability to correctly identify which state was sent.)
If the set of states is {}``highly quantum,'' then passing through
the classical channel is necessarily very disruptive, the eavesdropping
is detectable, and the average fidelity is low. If the states are
less quantum, then the classical restriction is less disruptive and
the average fidelity is higher.

In our restricted problem of local state discrimination, the objective
is to maximize the average fidelity over all of Eve\textquoteright{}s
measurements which are separable, yielding the \emph{separable fidelity}
\cite{Navascu\'es}. The idea is that once we broadcast the classical
information gleaned from the measurement, anyone can use it to prepare
a \textquotedbl{}best guess\textquotedbl{} state so as to maximize
the average fidelity. In particular, we can now assume that all the
components are in the same location, so all global operations are
allowed. We derive lower and upper bounds on the separable fidelity
and provide examples in bipartite and multipartite settings where
the bounds are shown to be optimal. This is shown by an explicit local
strategy for each example.

The second figure of merit that we consider is the probability of
identifying the state which was prepared. Note that, while the fidelity
is a measure of quantumness, the probability of success is a classical
measure of how well a quantum protocol encodes and decodes classical
information. We show that, when the states are mutually orthogonal,
the separable fidelity coincides with the maximum success probability,
which relates our results to bounds obtained in \cite{Nathanson-2005}.
We point out that this equality between separable fidelity and probability
of success depends crucially on the orthogonality of the states. To
illustrate this, we present an example where two sets (one consisting
of orthogonal states, and the other non-orthogonal states) are shown
to have the same separable fidelity even though the success probabilities
are different.

The rest of the paper is organized as follows. Section \ref{Sec-Basics}
introduces the basic notions and defines the measures of distinguishability
of an ensemble, demonstrating the equality of the success probability
and separable fidelity when the states are orthogonal. Section \ref{Sec-Bounds}
provides upper and lower bounds on the success probability and the
separable fidelity, and Sections \ref{Sec-Examples} and \ref{Sec-QubitExamples}
contain sets of examples for which these bounds are tight. We conclude
with a discussion of open problems in Section \ref{Sec-Conclusions}.

\section{Measures of Distinguishability}

\label{Sec-Basics}

\subsection{Perfect distinguishability and rank one separable measurements}

We consider a $k$-partite quantum system with $k\geq2$. The associated
Hilbert space $\mathcal{H}$ takes the form $\mathcal{H}=\otimes_{i=1}^{k}\mathcal{H}_{i}$,
where the dimension $d_{i}$ of each local Hilbert space is finite.

A separable measurement $\mathbb{M}=\left\{ M_{1},M_{2},\cdots,M_{n}\right\} $
on $\mathcal{H}$ is a POVM satisfying $\sum_{a=1}^{n}M_{a}=I_{\mathcal{H}}$,
and for each $a$, $M_{a}$ is a separable, positive semi-definite
operator \cite{Watrous-2005,Duan-2009}. By definition each $M_{a}$
is a positive linear combination of rank-one projections onto product
states, so without loss of generality we will assume that each $M_{a}$
is of the form \begin{eqnarray}
M_{a} & = & m_{a}|\chi_{a}\rangle\langle\chi_{a}|=m_{a}|\chi_{a}^{1}\rangle\langle\chi_{a}^{1}|\otimes|\chi_{a}^{2}\rangle\langle\chi_{a}^{2}|\otimes\cdots\otimes|\chi_{a}^{k}\rangle\langle\chi_{a}^{k}|\label{M_a-rank-one}\end{eqnarray}
 where $m_{a}\in(0,1]$ and $|\chi_{a}\rangle$ is a normalized product
state in $\otimes_{i=1}^{k}\mathcal{H}_{i}$.

We then associate each measurement outcome with the most likely input
state to produce it. This defines a decoding function $\mathbb{G}$
with $\mathbb{G}(a)\in\mathcal{H}$. Combining this with our rank-one
representation of separable POVMs gives us the necessary and sufficient
condition for perfect distinguishability by a separable measurement.

\begin{prop} The pure quantum states $|\psi_{1}\rangle,|\psi_{2}\rangle,\ldots,|\psi_{k}\rangle$
are perfectly distinguishable by a separable measurement if and only
if there exists a separable POVM $\mathbb{M}=\left\{ M_{1},M_{2},\cdots,M_{n}\right\} $
and a decoding function $\mathbb{G}$ such that each $M_{a}$ is rank
one in the form (\ref{M_a-rank-one}) and for all $i$ and $j$\emph{\begin{eqnarray}
\langle\psi_{j}\left|M\left(i\right)\right|\psi_{j}\rangle & = & \delta_{ij}\label{prop1-eq-1}\end{eqnarray}
}  where ${\displaystyle M(i)=\sum_{a:\mathbb{G}(a)=i}M_{a}}$.\end{prop}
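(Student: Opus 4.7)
The plan is to prove the proposition in two directions, with the forward direction carrying the real content.

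For the ``if'' direction, suppose such a rank-one separable POVM $\mathbb{M}$ and decoding function $\mathbb{G}$ exist. When the input state is $|\psi_{j}\rangle$, the probability of declaring the state to be $|\psi_{i}\rangle$ equals $\sum_{a:\mathbb{G}(a)=i}\langle\psi_{j}|M_{a}|\psi_{j}\rangle = \langle\psi_{j}|M(i)|\psi_{j}\rangle = \delta_{ij}$ by hypothesis. Thus every input state is correctly identified with certainty, which is precisely perfect distinguishability by a separable measurement.

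For the ``only if'' direction, assume some separable POVM $\{N_{1},\ldots,N_{m}\}$ together with a decoding function $f$ distinguishes the states perfectly. The first step is to refine this measurement into rank-one form. Since each $N_{a}$ is a separable positive semi-definite operator, by the definition of separability (as a positive combination of product projections) we obtain a decomposition $N_{a} = \sum_{b} m_{a,b}|\chi_{a,b}\rangle\langle\chi_{a,b}|$ with $m_{a,b}>0$ and each $|\chi_{a,b}\rangle$ a normalized product state. The operators $M_{a,b} := m_{a,b}|\chi_{a,b}\rangle\langle\chi_{a,b}|$ then form a rank-one separable POVM of the form (\ref{M_a-rank-one}), and extending the decoding by $\mathbb{G}(a,b) := f(a)$ gives $M(i) = \sum_{a:f(a)=i} N_{a} =: N(i)$.

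The final step verifies the Kronecker-delta condition. Perfect distinguishability of the original POVM means $\langle\psi_{i}|N(i)|\psi_{i}\rangle = 1$ for every $i$, hence also $\langle\psi_{i}|M(i)|\psi_{i}\rangle = 1$. Since $\sum_{i} M(i) = I$ and every $M(i)$ is positive semi-definite, the identity $\sum_{i}\langle\psi_{j}|M(i)|\psi_{j}\rangle = 1$ combined with $\langle\psi_{j}|M(j)|\psi_{j}\rangle = 1$ forces all remaining terms to vanish, establishing $\langle\psi_{j}|M(i)|\psi_{j}\rangle = \delta_{ij}$. Rather than presenting a serious obstacle, the only non-trivial ingredient is the rank-one product decomposition of a general separable operator, which is a standard structural fact about the separable cone; once that is in hand, the rest is elementary POVM bookkeeping combined with the constraint $\sum_{i} M(i) = I$ and positivity.
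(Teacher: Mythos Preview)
Your proof is correct and follows exactly the approach the paper indicates: the paper does not give a formal proof of this proposition but simply notes (just before stating it) that any separable POVM element is a positive combination of rank-one product projections, so one may assume the rank-one form without loss of generality, and then the distinguishability condition is the stated Kronecker-delta relation. You have spelled out precisely these two ingredients---the rank-one refinement and the positivity/completeness argument forcing the off-diagonal terms to vanish---so your argument is a faithful expansion of what the paper treats as immediate.
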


\subsection{Separable fidelity }

Given a set $S=\left\{ p_{i},|\psi_{i}\rangle\right\} $ of pure multipartite
quantum states $|\psi_{i}\rangle$ occurring with probabilities $p_{i}$,
we are often unable to distinguish them perfectly. In particular,
even if they are orthogonal it is not sufficient to guarantee distinguishability
with separable measurements. In this case, we wish to quantify how
much can be learned about the state of our system. The average fidelity
is one such measure, calculated with respect to a particular physical
protocol and information processing scheme, defined initially in \cite{Fuchs=Sasaki}.
Thus, for fixed set $S=\left\{ p_{i},|\psi_{i}\rangle\right\} $,
a measurement (POVM) $\mathbb{M}=\left\{ M_{a}\right\} $, and a guessing
strategy $\mathbb{G}:a\rightarrow|\phi_{a}\rangle$, the average fidelity
is given by \cite{Fuchs=Sasaki,Navascu\'es},\begin{eqnarray}
F\left(\mathbb{M},\mathbb{G}\right) & = & \sum_{i,a}p_{i}\left\langle \psi_{i}\left|M_{a}\right|\psi_{i}\right\rangle \left|\left\langle \psi_{i}|\phi_{a}\right\rangle \right|^{2}.\label{avg-fidelity}\end{eqnarray}
 This measures our ability to prepare a new quantum system in a state
which is close to the original state $\vert\psi_{i}\ket$. Note that
$0\leq F\left(\mathbb{M},\mathbb{G}\right)\leq1$, and $F\left(\mathbb{M},\mathbb{G}\right)=1$
if and only if the procedure $\left(\mathbb{M},\mathbb{G}\right)$
identifies the given state of our system perfectly.

In our work, we wish to understand the limitations of using only separable
measurements to distinguish the elements of $S=\left\{ p_{i},|\psi_{i}\rangle\right\} $.
Thus, the primary quantity of interest is the optimized form of $F\left(\mathbb{M},\mathbb{G}\right)$,
where the optimization is over all separable measurements and guessing
strategies. In Ref. \cite{Fuchs=Sasaki} (see also \cite{Fuchs})
the authors introduced the concept of {\em achievable fidelity}
for a fixed measurement $\mathbb{M}$, obtained by optimizing over
all guessing strategies $\mathbb{G}$: \begin{eqnarray}
F\left(\mathbb{M}\right) & = & \sup_{\mathbb{G}}F\left(\mathbb{M},\mathbb{G}\right).\label{achievable-sep-F}\end{eqnarray}
 Thus the achievable fidelity gives the best possible fidelity for
a given measurement $\mathbb{M}$.

The \emph{separable fidelity} is therefore defined as \cite{Navascu\'es}:\begin{eqnarray}
F_{S} & = & \sup_{\mathbb{M}}F\left(\mathbb{M}\right)\nonumber \\
 & = & \sup_{\mathbb{M},\mathbb{G}}F\left(\mathbb{M},\mathbb{G}\right)\label{sep-fidelity}\end{eqnarray}
 where the supremum is taken over all {\em separable} measurements
$\mathbb{M}$ and decoding schemes $\mathbb{G}$. In Ref.$\,$\cite{Navascu\'es}
it was shown that the separable fidelity can be obtained as the limit
of a sequence of real numbers, that is, $F_{S}=\lim_{n\rightarrow\infty}F_{S}^{\left(n\right)}$,
where $F_{S}^{(1)}\geq F_{S}^{(2)}\geq...$ and each $F_{S}^{(i)}$
can be efficiently computed numerically. While the result in \cite{Navascu\'es}
guarantees asymptotic convergence, it is however unclear how many
iterations it might take.

Note that for a measurement $\mathbb{M}$ which is separable, achievable
fidelity is by definition an intermediate quantity between the average
fidelity and separable fidelity. The advantage of introducing the
notion of achievable fidelity is that it can be computed \emph{exactly}
for any measurement. As we will show, the achievable fidelity is the
key ingredient in our analysis towards obtaining the desired bounds
on the separable fidelity.

\subsection{Separable fidelity with orthogonal states}

In the special case in which the elements of $S$ are mutually orthogonal,
there is an especially straightforward way to calculate the achievable
fidelity:

\begin{lem} For a set $S=\left\{ p_{i},|\psi_{i}\rangle\right\} _{i=1}^{i=N}$
of orthogonal pure quantum states, and a measurement $\mathbb{M}=\left\{ M_{a}\right\} $,
the achievable fidelity is given by \begin{eqnarray}
F\left(\mathbb{M}\right) & = & \sum_{a}\mu_{a},\label{lemma-1-eq-1}\end{eqnarray}
 where \emph{\begin{eqnarray}
\mu_{a} & = & \max_{i}\left\{ p_{i}\left\langle \psi_{i}\left|M_{a}\right|\psi_{i}\right\rangle \right\} .\label{lemma-1-eq-2}\end{eqnarray}
}  \end{lem}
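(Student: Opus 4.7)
The plan is to optimize the guessing strategy $\mathbb{G}:a\mapsto|\phi_{a}\rangle$ independently for each measurement outcome, exploiting the orthogonality of the states $|\psi_{i}\rangle$. Rewriting the average fidelity \eqref{avg-fidelity} as
\begin{eqnarray*}
F(\mathbb{M},\mathbb{G}) & = & \sum_{a}\sum_{i}p_{i}\langle\psi_{i}|M_{a}|\psi_{i}\rangle\,|\langle\psi_{i}|\phi_{a}\rangle|^{2}=\sum_{a}\sum_{i}c_{i,a}\,|\langle\psi_{i}|\phi_{a}\rangle|^{2},
\end{eqnarray*}
with $c_{i,a}=p_{i}\langle\psi_{i}|M_{a}|\psi_{i}\rangle\ge 0$, I observe that the guessing states $|\phi_{a}\rangle$ only couple across different $a$ through the separate sum. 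Hence $F(\mathbb{M})=\sup_{\mathbb{G}}F(\mathbb{M},\mathbb{G})$ factorizes into $\sum_{a}\sup_{|\phi_{a}\rangle}\sum_{i}c_{i,a}|\langle\psi_{i}|\phi_{a}\rangle|^{2}$.

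For each fixed $a$, I would next use the orthogonality hypothesis. Since $\{|\psi_{i}\rangle\}$ is an orthonormal set, Bessel's inequality gives $\sum_{i}|\langle\psi_{i}|\phi_{a}\rangle|^{2}\le\langle\phi_{a}|\phi_{a}\rangle=1$. Writing $q_{i}:=|\langle\psi_{i}|\phi_{a}\rangle|^{2}$, the inner optimization becomes a linear program: maximize $\sum_{i}c_{i,a}\,q_{i}$ subject to $q_{i}\ge0$ and $\sum_{i}q_{i}\le1$. The maximum of a non-negative linear functional on the standard simplex is attained at a vertex, giving $\max_{i}c_{i,a}=\mu_{a}$.

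To see that this bound is actually achievable, pick any index $i^{*}(a)\in\arg\max_{i}\{c_{i,a}\}$ and set $|\phi_{a}\rangle=|\psi_{i^{*}(a)}\rangle$. Then $|\langle\psi_{i}|\phi_{a}\rangle|^{2}=\delta_{i,i^{*}(a)}$, so the inner sum evaluates to exactly $c_{i^{*}(a),a}=\mu_{a}$. Summing over $a$ yields $F(\mathbb{M})=\sum_{a}\mu_{a}$, which is the claim.

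Essentially there is no real obstacle here: the proof reduces to noting that once the outcomes are decoupled, the guessing problem at each outcome is a trivial simplex maximization, and orthogonality is exactly the ingredient that turns $\sum_{i}|\langle\psi_{i}|\phi_{a}\rangle|^{2}$ into a quantity bounded by $1$. The only subtle point worth flagging in the writeup is that the optimal guess $|\phi_{a}\rangle$ need not be unique and, in particular, need not be a product state — but the lemma asserts only the value of the supremum, not that it is attained by a product-state decoding, so this is harmless.
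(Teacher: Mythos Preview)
Your proof is correct and follows essentially the same route as the paper: both decouple the optimization over outcomes $a$ and then, for each $a$, exploit orthogonality to reduce the inner maximization to picking out the largest coefficient $p_i\langle\psi_i|M_a|\psi_i\rangle$. The only cosmetic difference is that the paper phrases the inner step as finding the operator norm (dominant eigenvalue) of the diagonal operator $\sum_i c_{i,a}|\psi_i\rangle\langle\psi_i|$, whereas you phrase it via Bessel's inequality and a simplex argument; these are two equivalent descriptions of the same computation.
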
 \begin{proof} Observe that for any given measurement
$\mathbb{M}$, and an associated guessing strategy $\mathbb{G}$,
the average fidelity can be written as \begin{eqnarray}
F\left(\mathbb{M},\mathbb{G}\right) & = & \sum_{a}\left\langle \phi_{a}\right|\left(\sum_{i=1}^{N}p_{i}\left\langle \psi_{i}\left|M_{a}\right|\psi_{i}\right\rangle \left|\psi_{i}\right\rangle \left\langle \psi_{i}\right|\right)\left|\phi_{a}\right\rangle .\label{lemma-1-eq-3}\\
 & \le & \sum_{a}\lnrm\sum_{i=1}^{N}p_{i}\left\langle \psi_{i}\left|M_{a}\right|\psi_{i}\right\rangle \left|\psi_{i}\right\rangle \left\langle \psi_{i}\right|\rnrm_{\infty}\label{lemma-1-eq-3B}\end{eqnarray}
 where the norm $\lnrm\cdot\rnrm_{\infty}$ is simply the largest
singular value. By letting $|\phi_{a}\ket$ equal the dominant eigenvector,
we can achieve the optimum in Eq.$\,$(\ref{lemma-1-eq-3B}), in which
case we will have $F(\mathbb{M})$. Noting that the operator $\sum_{i,}p_{i}\left\langle \psi_{i}\left|M_{a}\right|\psi_{i}\right\rangle \left|\psi_{i}\right\rangle \left\langle \psi_{i}\right|$
is diagonal in the orthogonal states $|\psi_{i}\rangle$, the dominant
eigenvector is one of the $\{\vert\psi_{i}\ket\}$ and the result
follows immediately. Thus for every $a$, the best guess state $|\phi_{a}\rangle$
is simply the state $|\psi_{k}\rangle$ such that $p_{k}\left\langle \psi_{k}\left|M_{a}\right|\psi_{k}\right\rangle =\max_{i}\left\{ p_{i}\left\langle \psi_{i}\left|M_{a}\right|\psi_{i}\right\rangle \right\} $.
\end{proof} 

When the states in $S$ are mutually orthogonal, we sometimes treat
the quantum state $\vert\psi_{i}\ket$ as simply an encoding of the
classical label {}``i,'' and our goal in state discrimination is
simply to recover this value. The most natural measure for this is
the probability of success. If our system is initially in the state
$\vert\psi^{*}\ket$ taken from $S=\{p_{i},\vert\psi_{i}\ket\}$,
then the probability of successful identification using the measurement
$\mathbb{M}$ is defined as \begin{eqnarray}
P_{s}(\mathbb{M}) & = & P\left(\vert\hat{\psi}\ket=\vert\psi^{*}\ket\right)\\
 & = & \sum_{i}p_{i}P\left(\left.|\hat{\psi}\rangle=\vert\psi_{i}\rangle\right\vert \vert\psi^{*}\ket=\vert\psi_{i}\ket\right)\end{eqnarray}
 where $\vert\hat{\psi}\ket\in S$ is our best guess after performing
the measurement $\mathbb{M}$. We will write $P_{s}$ to indicate
the optimal value of $P_{s}(\mathbb{M})$ over all measurements $\mathbb{M}$
and $P_{s}(S)$ for the optimal value of $P_{s}(\mathbb{M})$ over
all separable $\mathbb{M}$.

The authors of \cite{Fuchs=Sasaki} showed that for any measurement
$\mathbb{M}$, $P_{s}(\mathbb{M})\le F(\mathbb{M})$. When the elements
of $S$ are mutually orthogonal, we know that there exists a (not
necessarily separable) measurement protocol $\mathbb{M}^{*}$ so that
$P_{s}(\mathbb{M}^{*})=F(\mathbb{M}^{*})=1$. This good fortune sometimes
obscures the fact that the equality of $P_{s}$ and $F$ does not
depend on using an optimal measurement, only on the orthogonality
of the states. This fact (alluded to in \cite{Navascu\'es}) allows
us to see that the strategy that minimizes the probability of error
is also optimal for separable fidelity.

\begin{thm} Let $S=\left\{ p_{i},|\psi_{i}\rangle\right\} $ be a
set of mutually orthogonal pure multipartite quantum states. Then
for any measurement $\mathbb{M}$, \emph{\begin{eqnarray}
P_{s}(\mathbb{M}) & = & F\left(\mathbb{M}\right)\end{eqnarray}
}  In particular, if we optimize over all all separable measurements
$\mathbb{M}$, we have \emph{\begin{eqnarray}
P_{s}(S) & = & F_{S}\end{eqnarray}
}  and these maxima are achieved using the same optimal measurement.
\end{thm}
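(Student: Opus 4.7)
The plan is to show that for orthogonal states, both $P_s(\mathbb{M})$ and $F(\mathbb{M})$ decompose as the same sum over measurement outcomes, namely $\sum_a \mu_a$ in the notation of Lemma 1. Since the states $\{|\psi_i\rangle\}$ are mutually orthogonal, the optimal quantum guess state $|\phi_a\rangle$ for the fidelity (shown in Lemma 1 to be one of the $|\psi_i\rangle$) behaves exactly like the optimal classical label guess for the success probability.

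First I would invoke Lemma 1 to get $F(\mathbb{M}) = \sum_a \mu_a$, where $\mu_a = \max_i \{p_i \langle\psi_i | M_a | \psi_i\rangle\}$. The nontrivial work already done there is the observation that the operator $\sum_i p_i \langle\psi_i|M_a|\psi_i\rangle |\psi_i\rangle\langle\psi_i|$ is diagonal in the orthonormal basis $\{|\psi_i\rangle\}$, so its largest eigenvalue is exactly $\mu_a$ and is attained by choosing $|\phi_a\rangle = |\psi_{k(a)}\rangle$ for the maximizing index $k(a)$.

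Next I would compute $P_s(\mathbb{M})$ directly. For any deterministic classical decoding rule $g : a \mapsto i$, conditioning on the true state being $|\psi_j\rangle$ we have by orthogonality
\begin{eqnarray*}
P_s(\mathbb{M}, g) & = & \sum_j p_j \sum_{a : g(a) = j} \langle \psi_j | M_a | \psi_j \rangle \; = \; \sum_a p_{g(a)} \langle \psi_{g(a)} | M_a | \psi_{g(a)} \rangle.
\end{eqnarray*}
Maximizing pointwise in $a$, the optimal rule sets $g(a) = \arg\max_i p_i \langle\psi_i | M_a | \psi_i\rangle$, giving
\begin{eqnarray*}
P_s(\mathbb{M}) & = & \sum_a \max_i \{p_i \langle\psi_i | M_a | \psi_i\rangle\} \; = \; \sum_a \mu_a \; = \; F(\mathbb{M}).
\end{eqnarray*}
The randomized case is subsumed since a mixture cannot exceed the best deterministic rule. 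Crucially, the same $k(a)$ maximizes both expressions, so the \emph{same} POVM together with the rule $\mathbb{G}(a) = k(a)$ (interpreted as a classical label for $P_s$ or as the pure state $|\psi_{k(a)}\rangle$ for $F$) is simultaneously optimal.

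Finally, taking the supremum of the identity $P_s(\mathbb{M}) = F(\mathbb{M})$ over separable measurements yields $P_s(S) = F_S$, and any separable $\mathbb{M}$ achieving one supremum achieves the other. I expect no substantive obstacle: the whole statement is really a careful reading of Lemma 1 combined with the standard pointwise optimization of the decoding function. The only subtle point worth stating explicitly is that orthogonality is what collapses the quantum-guess optimization for $F$ into a classical-label optimization, which is precisely the optimization for $P_s$; without orthogonality, the dominant eigenvector of the per-outcome operator need not lie in the span-direction of a single $|\psi_i\rangle$, and the equality breaks down.
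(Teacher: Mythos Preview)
Your proposal is correct and follows essentially the same approach as the paper: invoke Lemma~1 to express $F(\mathbb{M})=\sum_a \mu_a$, then compute $P_s(\mathbb{M})$ by pointwise maximum-likelihood decoding to obtain the identical expression. The paper phrases the $P_s$ computation via conditional probabilities $p(i\mid a)$ rather than an explicit decoding map $g$, and it proves the ``same optimal measurement'' claim by a short sandwich argument with two optimizers $\mathbb{M}_1,\mathbb{M}_2$ (invoking compactness for existence), whereas you simply take the supremum of the pointwise identity; both routes are equivalent and your version is arguably more direct.
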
 \begin{proof} The proof follows by noting that, when calculating
$P_{s}(\mathbb{M})$, our guess should be the state $\vert\psi_{i}\rangle$
which has the maximum likelihood conditioned on the observed outcome
$a$, which we write as $p\left(i|a\right)$. Thus we can rewrite
$P_{s}(\mathbb{M})$ by conditioning on the measurement outcome to
get \begin{eqnarray}
P_{s}(\mathbb{M}) & = & \sum_{a}p(a)\max_{i}p\left(i|a\right)\\
 & = & \sum_{a}\max_{i}p_{i}\langle\psi_{i}|M_{a}|\psi_{i}\rangle=\sum_{a}\mu_{a},\end{eqnarray}
 which is the expression for $F(\mathbb{M})$ from Lemma 1. 

To prove that $P_{s}(S)=F_{S}$ and that these are achieved using
the same separable measurement, we proceed in the following way. Let
$\mathfrak{M}$ be the set of separable measurements, and find optimal
measurements $\mathbb{M}_{1}$ and $\mathbb{M}_{2}$ in $\mathfrak{M}$
such that for all $\mathbb{M}\in\mathfrak{M}$, \begin{eqnarray*}
P_{s}\left(\mathbb{M}\right) & \leq & P_{s}\left(\mathbb{M}_{1}\right)\\
F_{S}\left(\mathbb{M}\right) & \leq & F_{S}\left(\mathbb{M}_{2}\right)\end{eqnarray*}
 Since the states are orthogonal, we know that $P_{s}\left(\mathbb{M}_{i}\right)=F_{S}\left(\mathbb{M}_{i}\right)$
for $i=1,2$, which in turn implies that \[
F_{S}\left(\mathbb{M}_{2}\right)=P_{s}\left(\mathbb{M}_{2}\right)\leq P_{s}\left(\mathbb{M}_{1}\right)=F_{S}\left(\mathbb{M}_{1}\right)\leq F\left(\mathbb{M}_{2}\right).\]
 Thus, the optimal measurement in $\mathfrak{M}$ for $P_{s}$ is
also optimal for $F$ and vice versa, and \begin{eqnarray}
P_{s}\left(S\right) & = & F_{S}\label{min-error-eq-6}\end{eqnarray}
 Note that we have not used any specific properties of separable measurements
except the existence of optimal measurements. This shows that the
argument works for any compact set of measurements $\mathfrak{M}$.
\end{proof}

\section{Bounds on Separable Fidelity}

\label{Sec-Bounds}

We will first obtain lower and upper bounds on the separable fidelity.
Later, we will give examples where these bounds are shown to be optimal.
It may be noted that an upper bound on the separable fidelity is also
an upper bound on the optimal local fidelity, that is, the best possible
fidelity attainable by LOCC.

\subsection{Lower bounds}

For the set $S=\left\{ p_{i},|\psi_{i}\rangle\right\} _{i=1}^{i=N}$,
consider the collection of subsets of $S$ that are perfectly distinguishable
by separable operations. That is, \begin{eqnarray}
\mathcal{R} & = & \left\{ X\subset S:F_{S}\left(X\right)=1\right\} \label{eq:defn-R}\end{eqnarray}
 If $S$ contains a pair of orthogonal states, then this two-element
set is in $\mathcal{R}$ since any two orthogonal pure states can
always be perfectly distinguished by LOCC \cite{Walgate-2000}. Let
$P(X)$ be the \emph{a priori} probability that a state selected from
$S$ is an element of $X$; that is, $P(X)=\sum_{|\psi_{i}\rangle\in X}p_{i}$.
Note that two such sets $X_{1}$ and $X_{2}$ need not be disjoint.

\begin{thm} Let $S=\left\{ p_{i},|\psi_{i}\rangle\right\} $ be a
set of pure multipartite quantum states. Then, \emph{\begin{eqnarray}
F_{S} & \geq & \max_{X\in{\cal R}}\{P(X)\}.\label{Theorem-2-eq-1}\end{eqnarray}
}  \end{thm}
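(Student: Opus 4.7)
The plan is to exhibit, for each $X \in \mathcal{R}$, an explicit separable measurement-and-decoding pair whose average fidelity over the full ensemble $S$ is at least $P(X)$; taking the supremum over $X \in \mathcal{R}$ then yields (\ref{Theorem-2-eq-1}). This turns the bound into a constructive statement rather than an optimization problem.

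First, I would fix $X \in \mathcal{R}$ and invoke Proposition 1, applied to the subset $X$, to obtain a rank-one separable POVM $\mathbb{M}_X = \{M_a\}$ together with a decoding function $\mathbb{G}_X$ assigning each outcome $a$ to some $|\psi_{\mathbb{G}_X(a)}\rangle \in X$, such that $\langle \psi_j | M(i) | \psi_j \rangle = \delta_{ij}$ for all $|\psi_i\rangle, |\psi_j\rangle \in X$, where $M(i) = \sum_{a:\mathbb{G}_X(a) = i} M_a$. In particular, for every $|\psi_i\rangle \in X$, grouping the sum over $a$ by the value of $\mathbb{G}_X(a)$ and using that the guess is exactly $|\psi_i\rangle$ on the outcomes in the $i$-th group, one gets $\sum_a \langle\psi_i|M_a|\psi_i\rangle\,|\langle\psi_i|\phi_a\rangle|^2 \geq \langle\psi_i|M(i)|\psi_i\rangle = 1$, which is in fact an equality since the total fidelity from a normalized state cannot exceed $1$.

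Next, I would evaluate the average fidelity of $(\mathbb{M}_X,\mathbb{G}_X)$ on the full set $S$ via (\ref{avg-fidelity}), splitting the outer sum into the indices $i$ with $|\psi_i\rangle \in X$ and those with $|\psi_i\rangle \notin X$. By the observation above, each term in the first sum contributes exactly $p_i$, aggregating to $P(X)$; each term in the second sum is a product of nonnegative quantities and therefore contributes at least $0$. Hence $F(\mathbb{M}_X,\mathbb{G}_X) \geq P(X)$, and since $(\mathbb{M}_X,\mathbb{G}_X)$ is a valid separable strategy, (\ref{sep-fidelity}) gives $F_S \geq P(X)$. As $X \in \mathcal{R}$ was arbitrary and $\mathcal{R}$ is finite (being a collection of subsets of the finite set $S$), the maximum exists and the theorem follows.

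The argument has no serious obstacle: the substance is packaged in Proposition 1, and the core idea is simply that a perfect discrimination protocol for $X$ can be run blindly on $S$, losing nothing on $X$ and not going negative off $X$. The only point worth flagging is that this bound is typically loose, since the states outside $X$ contribute zero to our estimate but often contribute strictly positive fidelity in reality; demonstrating tightness therefore requires the matching upper-bound analysis and examples developed in the subsequent sections.
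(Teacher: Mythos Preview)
Your proof is correct and follows essentially the same route as the paper: fix $X\in\mathcal{R}$, run the separable POVM that perfectly discriminates $X$ with the natural decoding into $X$, and lower-bound the average fidelity on $S$ by dropping the nonnegative contributions from states outside $X$ to obtain $P(X)$. The only cosmetic differences are that the paper works directly with the aggregated operators $M_q^X=M(q)$ rather than the rank-one pieces, and that you make explicit why the contribution from each $|\psi_i\rangle\in X$ equals exactly $p_i$ and why the maximum over $\mathcal{R}$ exists.
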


The proof is fairly immediate based on two observations. The first
is that the separable fidelity is lower bounded by the success probability,
as shown in \cite{Fuchs=Sasaki}. The second is that once we know
we are in the set $X$, then the probability of successfully identifying
our state is 1. For completeness, we include the following calculation.
\begin{proof} Let $X=\left\{ |\psi_{i}\rangle\right\} \in{\cal R}$
be any set whose elements are perfectly distinguishable by a separable
measurement $\mathbb{M}^{X}$. Let us denote the elements of $\mathbb{M}^{X}$
by $M_{q}^{X}$, where each element has the property that \begin{eqnarray}
\langle\psi_{r}|M_{q}^{X}|\psi_{r}\rangle=\langle\psi_{r}|\psi_{q}\rangle=\delta_{rq} &  & \;\forall~|\psi_{r}\rangle\in X\label{Thm-2-eq-2}\end{eqnarray}
 Since the elements of $X$ are distinguishable, they must be mutually
orthogonal, a property which need not be shared by the entire set
$S$. The rest of (\ref{Thm-2-eq-2}) follows from Proposition 1.

We shall now bound the average separable fidelity for the states in
$S$ by considering the strategy that consists of the measurement
$\mathbb{M}^{X}$ and the guessing map $\mathbb{G}^{X}:M_{q}^{X}\rightarrow|\psi_{q}\rangle\in X$.
With this the average separable fidelity is given by, \begin{eqnarray}
F\left(\mathbb{M}^{X},\mathbb{G}^{X}\right) & = & \sum_{i,q}p_{i}\left\langle \psi_{i}\left|M_{q}^{X}\right|\psi_{i}\right\rangle \left|\langle\psi_{q}|\psi_{i}\rangle\right|^{2}\;:|\psi_{q}\rangle\in X,|\psi_{i}\rangle\in S\nonumber \\
 & \ge & \sum_{i,q}p_{i}\left\langle \psi_{i}\left|M_{q}^{X}\right|\psi_{i}\right\rangle \left|\langle\psi_{q}|\psi_{i}\rangle\right|^{2}\;:|\psi_{q}\rangle,|\psi_{i}\rangle\in X\nonumber \\
 & = & \sum_{i,q}p_{q}\delta_{iq}=\sum_{q\in X}p_{q}=P(X)\label{Thm-2-eq-3}\end{eqnarray}
 where to arrive at the last line we have used Eq.$\,$(\ref{Thm-2-eq-2}).
The proof now follows by noting that\begin{eqnarray*}
F_{S} & \geq & \max_{X\in{\cal {R}}}F\left(\mathbb{M}^{X},\mathbb{G}^{X}\right)=\max_{X\in{\cal R}}\left\{ P(X)\right\} .\end{eqnarray*}
 \end{proof} We see that the lower bound does not depend upon the
cardinality of the set $X$, only on the \emph{a priori} probabilities.
Often we are interested in the scenario in which the states are equally
likely. Then the cardinality of the sets $X$ matter as given in the
following corollary:

\begin{cor} Let $S=\left\{ |\psi_{i}\rangle\right\} $ be a set of
$N$ mutually orthogonal pure multipartite quantum states where all
states are equally likely. Let $m$ be the maximum size of a subset
of $S$ that is perfectly distinguishable by separable measurements.
Then, \emph{\begin{eqnarray}
F_{S} & \geq & \frac{m}{N}\label{Cor-1-eq-1}\end{eqnarray}
}  \end{cor}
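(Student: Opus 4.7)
The corollary is a direct specialization of Theorem 2 to the uniform prior, so my plan is simply to invoke Theorem 2 and specialize.

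First I would observe that under the uniform distribution $p_i = 1/N$, for any subset $X \subset S$ the \emph{a priori} probability simplifies to
\begin{equation*}
P(X) \;=\; \sum_{|\psi_i\rangle \in X} p_i \;=\; \frac{|X|}{N}.
\end{equation*}
Thus maximizing $P(X)$ over $X \in \mathcal{R}$ is equivalent to maximizing $|X|$ over the same collection. Since the set $\mathcal{R}$ of perfectly separably distinguishable subsets of $S$ is nonempty (any singleton trivially lies in $\mathcal{R}$, and in fact any two orthogonal pure states can be distinguished by LOCC \cite{Walgate-2000}, so all two-element subsets also lie in $\mathcal{R}$) and $S$ is finite, the maximum $m = \max_{X \in \mathcal{R}} |X|$ is attained by some subset $X^{\ast} \in \mathcal{R}$.

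Applying Theorem 2 then gives
\begin{equation*}
F_S \;\geq\; \max_{X \in \mathcal{R}} P(X) \;=\; \max_{X \in \mathcal{R}} \frac{|X|}{N} \;=\; \frac{m}{N},
\end{equation*}
which is the desired bound. There is no real obstacle here; the orthogonality hypothesis in the corollary is not actually needed for the inequality itself (Theorem 2 makes no orthogonality assumption), but it ensures the bound is of the expected form since mutual orthogonality of $S$ makes the notion of \emph{perfect} distinguishability of subsets consistent with the earlier framework and guarantees that sufficiently small subsets automatically belong to $\mathcal{R}$. Hence the proof reduces to a one-line specialization of the preceding theorem.
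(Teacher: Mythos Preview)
Your proposal is correct and matches the paper's own approach: the corollary is stated immediately after Theorem 2 as the specialization to equally likely states, with no separate proof given beyond the observation that under a uniform prior $P(X)=|X|/N$, so that $\max_{X\in\mathcal{R}}P(X)=m/N$. Your additional remark that the orthogonality hypothesis is not actually used in deriving the inequality is accurate and worth noting.
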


\subsection{Upper bounds}

We will now derive upper bounds on the separable fidelity for bipartite
systems. We want to emphasize that the bounds can be applied to multipartite
cases as well by taking the minimum of the upper bounds across all
bi-partitions. The upper bounds also have an additional significance
in that they can be used to obtain the conditions when a given set
of states cannot be perfectly distinguished. We begin by noting a
useful result proved in, e.g., \cite{Harrow,Nathanson-2010}.

\begin{lem}Let $|\psi\rangle\in\mathbb{C}^{d_{1}}\otimes\mathbb{C}^{d_{2}}$
be a bipartite pure state with Schmidt coefficients $\sqrt{\lambda_{1}}\geq\sqrt{\lambda_{2}}\geq\sqrt{\lambda_{3}}\geq\cdots\geq\sqrt{\lambda_{d_{1}}}$.
If $T$ is a measurement operator, $0\le T\le I$, which has a positive
partial transpose, then\begin{eqnarray}
\langle\psi|T|\psi\rangle & \leq & \lambda_{1}\mbox{Tr}T\label{lem-2-eq-1}\end{eqnarray}
 and this bound is tight. In particular, letting $T$ be any rank-one
separable projection $\vert\phi_{1}\rangle\langle\phi_{1}\vert\otimes\vert\phi_{2}\rangle\langle\phi_{2}\vert$,
we have \emph{\begin{eqnarray}
\max_{|\phi_{1}\rangle|\phi_{2}\rangle}\left|\langle\phi_{1}|\langle\phi_{2}|\psi\rangle\right|^{2} & = & \lambda_{1}.\label{lem-2-eq-2}\end{eqnarray}
}  \end{lem}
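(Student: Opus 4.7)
The plan is to reduce the stated PPT inequality to an operator bound $\sigma^{\Gamma}\le\lambda_1 I$, where $\sigma = |\psi\rangle\langle\psi|$ and $\Gamma$ denotes the partial transpose on the second factor (expressed in the Schmidt bases). First I fix local bases so that $|\psi\rangle = \sum_i \sqrt{\lambda_i}\,|i\rangle|i\rangle$, and then invoke the identity $\tr(AB) = \tr(A^{\Gamma}B^{\Gamma})$ to rewrite
\[
\langle\psi|T|\psi\rangle \;=\; \tr(T\sigma) \;=\; \tr(T^{\Gamma}\sigma^{\Gamma}).
\]
Because $T$ is PPT, $T^{\Gamma}\ge 0$; hence once the operator inequality $\sigma^{\Gamma}\le\lambda_1 I$ is in hand, the claimed bound follows from $\tr(T^{\Gamma}\sigma^{\Gamma})\le\lambda_1\tr(T^{\Gamma}) = \lambda_1\tr(T)$, using that partial transpose preserves the trace.

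The substantive step is the spectral analysis of $\sigma^{\Gamma}$. A direct computation in the Schmidt basis gives $\sigma^{\Gamma} = \sum_{i,j}\sqrt{\lambda_i\lambda_j}\,|ij\rangle\langle ji|$, and this operator decomposes into invariant pieces: each diagonal ray $\mathrm{span}\{|ii\rangle\}$ is an eigenspace with eigenvalue $\lambda_i$, while for each pair $i<j$ the plane $\mathrm{span}\{|ij\rangle,|ji\rangle\}$ is invariant with eigenvalues $\pm\sqrt{\lambda_i\lambda_j}$. Since $\lambda_1\ge\lambda_i$ and $\lambda_1\ge\sqrt{\lambda_i\lambda_j}$ for all $i,j$, the largest eigenvalue of $\sigma^{\Gamma}$ is exactly $\lambda_1$, which is the desired operator inequality.

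Tightness of (\ref{lem-2-eq-1}) and the specialization (\ref{lem-2-eq-2}) are witnessed by the same choice $T = |1\rangle\langle 1|\otimes|1\rangle\langle 1|$, with $|1\rangle$ the leading Schmidt basis vector on each side: $T$ is a rank-one product projection (hence PPT with $T\le I$), $\tr T = 1$, and $\langle\psi|T|\psi\rangle = \lambda_1$. For (\ref{lem-2-eq-2}), any rank-one separable projection is PPT of trace one, so the general bound gives $|\langle\phi_1|\langle\phi_2|\psi\rangle|^2\le\lambda_1$, and the same $T$ shows the maximum is attained. I expect the only non-routine work to be the block-diagonal analysis of $\sigma^{\Gamma}$; the remaining bookkeeping just requires the partial transpose to act on the same tensor factor throughout, which is automatic once the Schmidt bases are fixed.
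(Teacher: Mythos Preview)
Your argument is correct. The key identity $\tr(T\sigma)=\tr(T^{\Gamma}\sigma^{\Gamma})$, the block-diagonal spectral decomposition of $\sigma^{\Gamma}$ in the Schmidt basis (eigenvalues $\lambda_i$ on the diagonal rays and $\pm\sqrt{\lambda_i\lambda_j}$ on the two-dimensional swap blocks), and the conclusion $\sigma^{\Gamma}\le\lambda_1 I$ are all valid, and from there the inequality and tightness follow exactly as you describe. Note that the hypothesis $T\le I$ is not actually used for the inequality itself; it only ensures $T$ is a legitimate POVM element, which matters for the tightness witness.

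As for comparison: the paper does not prove this lemma at all. It is stated as a known result with citations to \cite{Harrow} and \cite{Nathanson-2010}, and no argument is given in the text. Your partial-transpose approach is in fact the standard one used in those references, so there is no meaningful divergence to report.
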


We note that the positive partial transpose condition is a weaker
one than separability; and that the method of \cite{Navascu\'es}
uses the fact that this class has some computational advantages. However,
in this work we will continue to focus on the class of separable measurements.
For starters, Lemma 2 gives us an immediate upper bound on the separable
fidelity.

\begin{thm} Let $S=\left\{ p_{i},|\psi_{i}\rangle\right\} $ be a
set of states in $\mathbb{C}^{d_{1}}\otimes\mathbb{C}^{d_{2}}$, where
$d_{1}\leq d_{2}$. Let $\sqrt{\lambda_{i}}$ be the largest Schmidt
coefficient of the state $|\psi_{i}\rangle$. Then, \begin{eqnarray}
F_{S} & \leq & d_{1}d_{2}||\Lambda||_{\infty},\label{theorem-3-eq-1}\end{eqnarray}
 where $\Lambda=\sum_{i}p_{i}\lambda_{i}|\psi_{i}\rangle\langle\psi_{i}|$.
\end{thm}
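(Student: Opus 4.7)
The plan is to start from the definition of separable fidelity, apply the product-state form of a rank-one separable POVM, invoke Lemma 2 to replace each product overlap by the corresponding largest Schmidt coefficient, and then recognise the resulting operator as $\Lambda$, so that the bound pops out from $\sum_a m_a = \mathrm{Tr}(I) = d_1 d_2$.

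Concretely, I would fix an arbitrary separable POVM $\mathbb{M}=\{M_a\}$ and decoding $\mathbb{G}:a\mapsto |\phi_a\rangle$, write $M_a = m_a |\chi_a\rangle\langle\chi_a|$ with $|\chi_a\rangle$ a unit product state as in \eqref{M_a-rank-one}, and start from
\begin{eqnarray*}
F(\mathbb{M},\mathbb{G}) & = & \sum_{i,a} p_i \, m_a \, |\langle\chi_a|\psi_i\rangle|^2 \, |\langle\psi_i|\phi_a\rangle|^2 .
\end{eqnarray*}
The key step is applying Lemma 2 to each product state $|\chi_a\rangle$: since $|\chi_a\rangle\langle\chi_a|$ is a rank-one separable projection, equation \eqref{lem-2-eq-2} gives $|\langle\chi_a|\psi_i\rangle|^2 \le \lambda_i$, where $\lambda_i$ is the largest Schmidt eigenvalue of $|\psi_i\rangle$. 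Substituting yields
\begin{eqnarray*}
F(\mathbb{M},\mathbb{G}) & \le & \sum_a m_a \, \langle\phi_a|\Bigl(\sum_i p_i \lambda_i |\psi_i\rangle\langle\psi_i|\Bigr)|\phi_a\rangle \;=\; \sum_a m_a \, \langle\phi_a|\Lambda|\phi_a\rangle .
\end{eqnarray*}

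Now the guess states $|\phi_a\rangle$ are unit vectors (with no entanglement constraint), so $\langle\phi_a|\Lambda|\phi_a\rangle \le \lnrm\Lambda\rnrm_\infty$ for every $a$. The final ingredient is the normalisation of the separable POVM: since each $|\chi_a\rangle$ is a unit vector, $\mathrm{Tr}(M_a)=m_a$, and $\sum_a M_a = I_{\mathcal{H}}$ gives $\sum_a m_a = \mathrm{Tr}(I_{\mathcal{H}}) = d_1 d_2$. Combining these,
\begin{eqnarray*}
F(\mathbb{M},\mathbb{G}) & \le & \lnrm\Lambda\rnrm_\infty \sum_a m_a \;=\; d_1 d_2 \lnrm\Lambda\rnrm_\infty,
\end{eqnarray*}
uniformly in $\mathbb{M}$ and $\mathbb{G}$, so the supremum yields \eqref{theorem-3-eq-1}.

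There is no real obstacle here; the only thing to be careful about is the direction of the inequality in applying Lemma 2 (we use the bound $|\langle\chi_a|\psi\rangle|^2 \le \lambda_1$ rather than its tightness), and the fact that the $|\phi_a\rangle$ need not be product vectors — which is fine because we are bounding an upper envelope of $\langle\phi_a|\Lambda|\phi_a\rangle$ by the operator norm rather than invoking any local constraint on the decoding.
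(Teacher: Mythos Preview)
Your proof is correct and follows essentially the same approach as the paper: both use the rank-one product form of separable POVMs, apply Lemma~2 to bound $|\langle\chi_a|\psi_i\rangle|^2 \le \lambda_i$, recognise the resulting operator as $\Lambda$, bound by $\|\Lambda\|_\infty$, and use $\sum_a m_a = d_1 d_2$. The only cosmetic difference is that the paper routes the argument through the achievable fidelity $F(\mathbb{M})$ (taking the operator norm first, then applying Lemma~2 to the operator inequality), whereas you apply Lemma~2 termwise and then bound $\langle\phi_a|\Lambda|\phi_a\rangle$ by the norm; the logic is the same.
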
 \begin{proof} For some rank-one separable measurement
$\mathbb{M}=\left\{ M_{a}=m_{a}|\chi_{a}\rangle\langle\chi_{a}|\right\} $,
where $|\chi_{a}\rangle\in\mathbb{C}^{d_{1}}\otimes\mathbb{C}^{d_{2}}$
is a normalized product vector, and a guessing strategy $\mathbb{G}:a\rightarrow|\phi_{a}\rangle$,
the average fidelity and achievable fidelity are given by \begin{eqnarray}
F\left(\mathbb{M},\mathbb{G}\right) & = & \sum_{a}m_{a}\left\langle \phi_{a}\right|\left(\sum_{i=1}^{N}p_{i}\left|\langle\psi_{i}|\chi_{a}\rangle\right|^{2}\left|\psi_{i}\right\rangle \left\langle \psi_{i}\right|\right)\left|\phi_{a}\right\rangle \label{theorem-3-eq-2}\\
F\left(\mathbb{M}\right) & = & \sum_{a}m_{a}\lnrm\sum_{i=1}^{N}p_{i}\left|\langle\psi_{i}|\chi_{a}\rangle\right|^{2}\left|\psi_{i}\right\rangle \left\langle \psi_{i}\right|\rnrm_{\infty}\end{eqnarray}
 The second line follows since the achievable fidelity maximizes over
all choices of $\vert\phi_{a}\ket$, which will simply be the maximum
eigenvector of the indicated operator. Using Lemma 2, $\left|\langle\psi_{i}|\chi_{a}\rangle\right|^{2}\le\lambda_{i}$
and \begin{eqnarray*}
F\left(\mathbb{M}\right) & \le & \sum_{a}m_{a}\lnrm\sum_{i}p_{i}\lambda_{i}\left|\psi_{i}\right\rangle \left\langle \psi_{i}\right|\rnrm_{\infty}=||\Lambda||_{\infty}\sum_{a}m_{a}=d_{1}d_{2}||\Lambda||_{\infty}\end{eqnarray*}
 where we have used $\sum_{a}m_{a}=\tr\left(\sum m_{a}|\chi_{a}\rangle\langle\chi_{a}|\right)=d_{1}d_{2}$.
Because the above bound on the achievable fidelity holds for any measurement
including the optimal one, this completes the proof. \end{proof}
So far the bounds obtained are completely general. The following corollaries
concern two special but extensively studied cases in the literature
\cite{Ghosh-2001,Ghosh-2004,fan-2005,Nathanson-2005,Cosentino-2012,BGK,Qi-Duan}:
equally likely orthogonal states and maximally entangled states.

Note that in the special case when the states $\{|\psi_{i}\rangle\}$
are orthogonal, $||\Lambda||_{\infty}=\max_{i}\left\{ p_{i}\lambda_{i}\right\} $.
In this case, we get the following specific corollary.

\begin{cor}For a set of $N$ equally likely orthogonal states in
$\mathbb{C}^{d_{1}}\otimes\mathbb{C}^{d_{2}}$, we have \begin{eqnarray}
F_{S} & \leq & \frac{\lambda_{\max}d_{1}d_{2}}{N}\label{cor-2-eq-1}\end{eqnarray}
 \emph{ where $\lambda_{\max}=\max_{i}\lambda_{i}$.} \end{cor}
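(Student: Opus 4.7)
The plan is to derive this corollary as a direct specialization of Theorem 3, with the key observation that orthogonality of the states turns the operator $\Lambda$ into one whose spectrum is easy to read off.

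First I would invoke Theorem 3, which gives the general upper bound $F_S \leq d_1 d_2 \|\Lambda\|_\infty$ with $\Lambda = \sum_i p_i \lambda_i |\psi_i\rangle \langle \psi_i|$. Under the hypotheses of the corollary, the prior probabilities are uniform, so $p_i = 1/N$, and hence
\begin{eqnarray*}
\Lambda & = & \frac{1}{N} \sum_{i=1}^{N} \lambda_i |\psi_i\rangle \langle \psi_i|.
\end{eqnarray*}

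Next I would use the orthogonality of the states $\{|\psi_i\rangle\}$: since $\langle \psi_i | \psi_j \rangle = \delta_{ij}$, the above expression is already a spectral decomposition of $\Lambda$ (on the subspace spanned by the $|\psi_i\rangle$, with zero eigenvalues elsewhere). The eigenvalues are exactly $\lambda_i/N$, so its operator norm is $\|\Lambda\|_\infty = \max_i \lambda_i / N = \lambda_{\max}/N$. Substituting back into Theorem 3 yields
\begin{eqnarray*}
F_S & \leq & d_1 d_2 \cdot \frac{\lambda_{\max}}{N} = \frac{\lambda_{\max} d_1 d_2}{N},
\end{eqnarray*}
which is the desired bound.

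There is no real obstacle here; the remark just after Theorem 3 already notes that $\|\Lambda\|_\infty = \max_i \{p_i \lambda_i\}$ in the orthogonal case, and the corollary is exactly this remark restated for the uniform prior. The only thing to be careful about is confirming that the orthogonality of the $|\psi_i\rangle$ suffices to diagonalize $\Lambda$ (which it does, as the cross terms $|\psi_i\rangle \langle \psi_j|$ never appear in the sum), so that the infinity norm reduces to the largest coefficient $\lambda_{\max}/N$ rather than requiring a more delicate spectral analysis.
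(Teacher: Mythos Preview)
Your proof is correct and follows exactly the same approach as the paper: the paper simply remarks that for orthogonal states $\|\Lambda\|_\infty = \max_i\{p_i\lambda_i\}$ and then specializes to the uniform prior $p_i = 1/N$, which is precisely what you do.
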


\begin{cor} Let $S=\left\{ p_{i},|\Psi_{i}\rangle:i=1,...,N\right\} $
be a set of maximally entangled states in $\mathbb{C}^{d_{1}}\otimes\mathbb{C}^{d_{2}}$,
where $d_{1}\leq d_{2}$. Then, \begin{eqnarray}
F_{S} & \leq & ||\rho||_{\infty}d_{2},\label{cor-3-eq-1}\end{eqnarray}
 where $\rho=\sum_{i}p_{i}|\psi_{i}\rangle\langle\psi_{i}|$ is the
mixed state representing our knowledge of the system prior to any
measurement. \end{cor}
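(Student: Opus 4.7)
The plan is to derive Corollary 3 as a direct specialization of Theorem 3 once we plug in the Schmidt structure forced by maximal entanglement. The key observation is that if $|\Psi_i\rangle \in \mathbb{C}^{d_1}\otimes \mathbb{C}^{d_2}$ is maximally entangled with $d_1 \leq d_2$, then its reduced state on the smaller factor is $I_{d_1}/d_1$, so all $d_1$ nonzero Schmidt coefficients equal $\sqrt{1/d_1}$. In particular the largest Schmidt coefficient is $\sqrt{\lambda_i} = \sqrt{1/d_1}$ for every $i$, so $\lambda_i = 1/d_1$ is constant across the ensemble.

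With this fact in hand I would simply substitute into the operator $\Lambda$ of Theorem 3. Since $\lambda_i$ factors out of the sum,
\begin{equation*}
\Lambda \;=\; \sum_i p_i \lambda_i |\Psi_i\rangle\langle \Psi_i| \;=\; \frac{1}{d_1}\sum_i p_i |\Psi_i\rangle\langle \Psi_i| \;=\; \frac{1}{d_1}\,\rho.
\end{equation*}
Taking the operator norm gives $\nrm\Lambda\nrm_\infty = \nrm\rho\nrm_\infty/d_1$. Then Theorem 3 yields
\begin{equation*}
F_S \;\leq\; d_1 d_2\,\nrm\Lambda\nrm_\infty \;=\; d_1 d_2 \cdot \frac{\nrm\rho\nrm_\infty}{d_1} \;=\; d_2\,\nrm\rho\nrm_\infty,
\end{equation*}
which is exactly the claimed bound.

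There is essentially no obstacle here: the work has already been done inside Theorem 3 and Lemma 2, and the corollary is a one-line consequence of the uniform Schmidt spectrum of maximally entangled states. The only point worth emphasizing in the write-up is the asymmetry convention $d_1 \leq d_2$, which matters because a vector in $\mathbb{C}^{d_1}\otimes\mathbb{C}^{d_2}$ can have at most $d_1 = \min(d_1,d_2)$ nonzero Schmidt coefficients, and \emph{maximally entangled} here means that all of them are present and equal. If one instead adopted the opposite convention, the factor $d_2$ in the bound would be replaced by $\max(d_1,d_2)$, so it is worth stating this explicitly for clarity.
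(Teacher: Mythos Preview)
Your proof is correct and matches the paper's own argument essentially line for line: both observe that maximal entanglement forces $\lambda_i = 1/d_1$, so $\Lambda = \rho/d_1$, and then substitute into Theorem 3 to cancel the factor of $d_1$. The paper's justification is a single sentence to the same effect, so there is nothing to add.
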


Corollary 3 uses the fact that if $\vert\psi_{i}\rangle$ is maximally
entangled in $\mathbb{C}^{d_{1}}\otimes\mathbb{C}^{d_{2}}$ with $d_{1}\leq d_{2}$,
then $\lambda_{i}=\frac{1}{d_{1}}$ and $\rho=d_{1}\Lambda$. In \cite{Fuchs=Sasaki},
it was observed that $||\rho||_{\infty}$ is a weak lower bound on
the fidelity when global measurements are allowed.

\textbf{Remark 1.} It is worth noting two interesting consequences
when the above results are applied to orthogonal states. First of
all, any set of maximally entangled states in $\mathbb{C}^{d_{1}}\otimes\mathbb{C}^{d_{2}}$,
each having Schmidt rank $d_{1}$, cannot be perfectly distinguished
by LOCC (or by separable measurements) if $p_{\max}<\frac{1}{d_{2}}$.
Second, any set of $N$ equally likely maximally entangled states
in $\mathbb{C}^{d_{1}}\otimes\mathbb{C}^{d_{2}}$ cannot be perfectly
distinguished by separable measurements (and therefore by LOCC) if
$N>d_{2}$. This generalizes the known result that in $\mathbb{C}^{d}\otimes\mathbb{C}^{d}$
no more than $d$ maximally entangled states of Schmidt rank $d$
can be perfectly distinguished by LOCC \cite{Nathanson-2005} and
is a consequence of the less known fact that such a set of maximally
entangled states cannot be distinguished with a positive partial transpose
measurement \cite{Cosentino-2012,Qi-Duan}.\\

\textbf{Remark 2} Note that the bound in Eq.$\,$(\ref{cor-2-eq-1})
matches the maximum probability of distinguishing any $N$ equally
likely states in $\mathbb{C}^{d_{1}}\otimes\mathbb{C}^{d_{2}}$ by
LOCC \cite{Nathanson-2005}. Theorem 1 tells us that the success probability
and the average fidelity will be equal in this case; Corollary 2 strengthens
the result in \cite{Nathanson-2005} by extending it to all separable
measurements. However, if we apply Theorem 3 to a set of nonorthogonal
maximally entangled states, the upper bound on the fidelity \emph{increases}
while bounds on the probability of success tend to \emph{decrease}.
Note also that if the states are sufficiently nonorthogonal, the bound
in Theorem 3 can be greater than 1 and hence not informative.

Theorem 3 is useful especially in cases when all of the Schmidt coefficients
are the same, as are the \emph{a priori} probabilities. The following
theorem gives an analogous result that can be tight in more general
settings:

\begin{thm} Let $S=\left\{ p_{i},|\psi_{i}\rangle:i=1,\ldots,N\right\} $
be a set of states in $\mathbb{C}^{d_{1}}\otimes\mathbb{C}^{d_{2}}$,
with $d_{1}\leq d_{2}$ and where state $\vert\psi_{i}\rangle$ occurs
with probability $p_{i}$. Let $\sqrt{\lambda_{i}}$ be the maximum
Schmidt coefficient of $\vert\psi_{i}\rangle$, and assume that the
states are labeled so that $p_{1}\lambda_{1}\ge p_{2}\lambda_{2}\ge\cdots\ge p_{N}\lambda_{N}$.

Let $r$ be the positive integer such that ${\displaystyle \kappa:=\sum_{i=1}^{r-1}\lambda_{i}^{-1}\le d_{1}d_{2}<\sum_{i=1}^{r}\lambda_{i}^{-1}}$.
Then for any separable measurement $\mathbb{M}$,\begin{eqnarray}
P_{s}\left(\mathbb{M}\right) & \leq & \sum_{i=1}^{r-1}p_{i}+p_{r}\lambda_{r}\left(d_{1}d_{2}-\kappa\right)\label{thm-4-eq-1}\end{eqnarray}
 \end{thm}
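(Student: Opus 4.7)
The plan is to rewrite $P_s(\mathbb{M})$ in terms of a grouping of the POVM elements by their decoded state, apply Lemma~2 to each group, and then reduce the problem to a continuous knapsack whose greedy solution yields the stated bound.

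First, fix the decoding $\mathbb{G}$ to be the maximum-likelihood decoder and, for each $i = 1,\ldots,N$, collect the POVM elements assigned to state $|\psi_i\rangle$ into a single operator $A_i = \sum_{a:\mathbb{G}(a)=i} M_a$. Since each $M_a$ is a positive multiple of a rank-one product projector, each $A_i$ is positive semidefinite and separable, and $\sum_i A_i \leq I$. The success probability then reads $P_s(\mathbb{M}) = \sum_i p_i \langle\psi_i|A_i|\psi_i\rangle$.

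Next, I apply Lemma~2 with $T = A_i$, valid because $A_i$ is separable (in particular PPT) and $0 \le A_i \le I$. This yields $\langle\psi_i|A_i|\psi_i\rangle \leq \lambda_i\,\mathrm{Tr}(A_i)$. Writing $x_i = \langle\psi_i|A_i|\psi_i\rangle \in [0,1]$ and $t_i = \mathrm{Tr}(A_i) \geq 0$, the constraints $x_i \le \min(1,\lambda_i t_i)$ and $\sum_i t_i \leq \mathrm{Tr}(I) = d_1 d_2$ hold simultaneously. Consequently $P_s(\mathbb{M})$ is upper bounded by the optimum of the linear program that maximizes $\sum_i p_i x_i$ under these constraints.

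At the optimum I may take $t_i = x_i/\lambda_i$, since $t_i$ only enters through the binding inequality $x_i \leq \lambda_i t_i$ and the budget $\sum t_i \leq d_1 d_2$; any slack in $t_i$ is wasted. The LP then becomes the fractional knapsack
\[
\max\;\sum_i p_i x_i \quad \text{subject to}\quad \sum_i \frac{x_i}{\lambda_i} \leq d_1 d_2,\quad x_i \in [0,1].
\]
The ``value per unit weight'' of item $i$ is $p_i\lambda_i$, and by hypothesis these are already in decreasing order. A standard exchange argument then shows the greedy solution is optimal: fully pack items $1,\ldots,r-1$ (consuming $\kappa = \sum_{i=1}^{r-1}\lambda_i^{-1}$ of the budget), partially pack item $r$ with $x_r = \lambda_r(d_1 d_2 - \kappa) \in [0,1]$, and take $x_i = 0$ for $i > r$. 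The objective value is precisely $\sum_{i=1}^{r-1} p_i + p_r\lambda_r(d_1 d_2 - \kappa)$.

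The main obstacle is setting up the LP relaxation cleanly: one has to verify that each $A_i$ inherits separability from its summands so that Lemma~2 is applicable, and that the bound $\sum_i \mathrm{Tr}(A_i) \leq d_1 d_2$ is the only additional ingredient needed. Once that is done, the knapsack step is routine, with the only subtlety being the correct treatment of the critical item $r$ at which the budget runs out.
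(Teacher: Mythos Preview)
Your proposal is correct and follows essentially the same route as the paper: group the POVM elements by decoded label, use Lemma~2 together with the trivial bound $\langle\psi_i|A_i|\psi_i\rangle\le 1$ to obtain $P_s(\mathbb{M})\le\sum_i p_i\min(1,\lambda_i t_i)$ with $\sum_i t_i\le d_1d_2$, and solve the resulting fractional knapsack greedily. The only cosmetic difference is that you invoke Lemma~2 once on the aggregated separable operator $A_i$, whereas the paper applies the rank-one case $|\langle\psi_i|\chi_a\rangle|^2\le\lambda_i$ term-by-term and then sums; the two are equivalent.
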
 \begin{proof} For every measurement outcome $a$, we
assign a best guess $\mathbb{G}(a)\in\{\vert\psi_{i}\rangle\}$ of
the identity of our state. This partitions the set of measurement
outcomes, and we write $\mathbb{G}^{-1}(i)$ as the set of measurement
outcomes $a$ for which $\mathbb{G}(a)=\vert\psi_{i}\rangle$. Note
that for any $i$,\begin{eqnarray}
\sum_{a\in\mathbb{G}^{-1}(i)}m_{a}\left\vert \langle\psi_{i}\vert\chi_{a}\rangle\right\vert ^{2} & \leq & 1.\label{ineq measure}\end{eqnarray}
 In addition, Lemma 2 tells us that for any $a$, $\left\vert \langle\psi_{i}\vert\chi_{a}\rangle\right\vert ^{2}\le\lambda_{i}$.
If we write $\tau_{i}=\sum_{a\in\mathbb{G}^{-1}(i)}m_{a}$, then\begin{eqnarray}
\sum_{a\in\mathbb{G}^{-1}(i)}m_{a}\left\vert \langle\psi_{i}\vert\chi_{a}\rangle\right\vert ^{2} & \leq & \sum_{a\in\mathbb{G}^{-1}(i)}m_{a}\lambda_{i}=\lambda_{i}\tau_{i}\label{ineq measure 2}\end{eqnarray}
 Combining the two bounds, (\ref{ineq measure}) and (\ref{ineq measure 2}),
gives us

\begin{eqnarray*}
P_{s}\left(\mathbb{M}\right) & = & \sum_{a}m_{a}\max_{i}p_{i}\left|\langle\psi_{i}|\chi_{a}\rangle\right|^{2}\\
 & = & \sum_{i}p_{i}\sum_{a\in\mathbb{G}^{-1}(i)}m_{a}\left|\langle\psi_{i}|\chi_{a}\rangle\right|^{2}\\
 & \leq & \sum_{i}p_{i}\min\left(1,\lambda_{i}\tau_{i}\right)\\
 & = & \sum_{i}p_{i}\lambda_{i}\min\left(\lambda_{i}^{-1},\tau_{i}\right)\\
 & \le & \max_{\left\{ \tau_{i}\right\} }\left\{ \sum_{i}p_{i}\lambda_{i}\tau_{i}:\tau_{i}\in\left[0,\lambda_{i}^{-1}\right],\sum_{i}\tau_{i}=d_{1}d_{2}\right\} \end{eqnarray*}
 This constrained optimization problem is solved by making $\tau_{i}$
as large as possible for large values of $p_{i}\lambda_{i}$, until
you reach $\sum_{i}\tau_{i}=d_{1}d_{2}$. This gives the bound in
(\ref{thm-4-eq-1}). 
 \end{proof}

\section{Optimality of the Bounds: Examples }

\label{Sec-Examples}

In this section we will present examples where the bounds obtained
in the previous section are shown to be tight. In each case the optimality
of the bound in question follows by computing the separable fidelity
exactly. We also give explicit local strategies to achieve these values.

\subsection{Lower bound in Theorem 2 and upper bound in Theorem 4}

\subsubsection*{Example 1 }

Consider the set of four Bell states in $\mathbb{C}^{2}\otimes\mathbb{C}^{2}$,\begin{eqnarray*}
|\Phi_{1}\rangle & = & \frac{1}{\sqrt{2}}\left(|00\rangle+|11\rangle\right),\\
|\Phi_{2}\rangle & = & \frac{1}{\sqrt{2}}\left(|00\rangle-|11\rangle\right),\\
|\Phi_{3}\rangle & = & \frac{1}{\sqrt{2}}\left(|01\rangle+|10\rangle\right),\\
|\Phi_{4}\rangle & = & \frac{1}{\sqrt{2}}\left(|01\rangle-|10\rangle\right),\end{eqnarray*}
 with probabilities $p_{1}\geq p_{2}\geq p_{3}\geq p_{4}$.

The following facts are known: (a) no more than two Bell states can
be perfectly distinguished by LOCC \cite{Ghosh-2001} or by separable
measurements (Corollary 3, \cite{Cosentino-2012}), and (b) any two
Bell states can be perfectly distinguished by LOCC (this follows from
the result in \cite{Walgate-2000}). Thus the lower bound according
to Theorem 2 is given by $F_{S}\geq p_{1}+p_{2}$.

On the other hand, Theorem 4 implies that $P_{s}(S)\le p_{1}+p_{2}$,
as each of the $\lambda_{i}=\frac{1}{2}$. Since the Bell states are
orthogonal, $F_{S}=P_{s}(S)\le p_{1}+p_{2}$. Thus, it must be the
case that $F_{S}=p_{1}+p_{2}$. This matches the result obtained numerically
in \cite{Navascu\'es}.

To attain this fidelity by LOCC one can simply do the measurement
in the following product basis: $\left\{ |++\rangle,|+-\rangle,|-+\rangle,|--\rangle\right\} $,
where $|\pm\rangle=\frac{1}{\sqrt{2}}\left(|0\rangle\pm|1\rangle\right)$
followed by the decoding map: $\left\{ ++,--\right\} \rightarrow\Phi_{1};\left\{ +-,-+\right\} \rightarrow\Phi_{2}$.
Note that the measurement perfectly distinguishes the states $|\Phi_{1}\rangle,|\Phi_{2}\rangle$.

\subsubsection*{Example 2}

Consider the set of four GHZ states in $\mathbb{C}^{2}\otimes\mathbb{C}^{2}\otimes\mathbb{C}^{2}$:\begin{eqnarray*}
|\Psi_{1}\rangle & = & \frac{1}{\sqrt{2}}\left(|000\rangle+|111\rangle\right),\\
|\Psi_{2}\rangle & = & \frac{1}{\sqrt{2}}\left(|000\rangle-|111\rangle\right),\\
|\Psi_{3}\rangle & = & \frac{1}{\sqrt{2}}\left(|011\rangle+|100\rangle\right),\\
|\Psi_{4}\rangle & = & \frac{1}{\sqrt{2}}\left(|011\rangle-|100\rangle\right),\end{eqnarray*}
 with probabilities $p_{1}\geq p_{2}\geq p_{3}\geq p_{4}$.

Label the qubits as $A,$ $B$, and $C$. Observe that the set is
locally indistinguishable across the bipartition $A:BC$. This is
because in the bipartition $A:BC$ the states look exactly like the
four Bell states embedded in $\mathbb{C}^{2}\otimes\mathbb{C}^{4}$.
By the previous example, $F_{S}\left(A:BC\right)=p_{1}+p_{2}$. On
the other hand, the set is perfectly distinguishable across the bipartitions
$B:AC$ and $C:AB$. This implies that $F_{S}\left(B:CA\right)=F_{S}\left(C:AB\right)=1$.

However, the separable fidelity in a multipartite setting is bounded
by the minimum separable fidelity over all bipartitions. That is,\begin{eqnarray*}
F_{S}\left(A:B:C\right) & \leq & \min\left\{ F_{S}\left(A:BC\right),F_{S}\left(B:AC\right),F_{S}\left(C:AB\right)\right\} \\
 & \leq & F_{S}\left(A:BC\right)=p_{1}+p_{2}.\end{eqnarray*}
 For a lower bound, we know that any two orthogonal multipartite states
can be locally distinguished \cite{Walgate-2000}, so by Theorem 2,
$F_{S}(A:B:C)\ge P\left(\{|\Psi_{1}\ket,|\Psi_{2}\ket\}\right)=p_{1}+p_{2}$.
We then immediately obtain that $F_{S}\left(A:B:C\right)=p_{1}+p_{2}$.


\subsection{Upper bound in Corollary 2}

\subsubsection*{Example 3 }

We generalize Example 1 by considering the following orthogonal basis
in $\mathbb{C}^{2}\otimes\mathbb{C}^{2}$: \begin{eqnarray*}
|\psi_{1}\rangle & = & \alpha|00\rangle+\beta|11\rangle,\\
|\psi_{2}\rangle & = & \beta|00\rangle-\alpha|11\rangle,\\
|\psi_{3}\rangle & = & \alpha|01\rangle+\beta|10\rangle,\\
|\psi_{4}\rangle & = & \beta|01\rangle-\alpha|10\rangle,\end{eqnarray*}
 where $\alpha\geq\beta>0$ are real and satisfy $\alpha^{2}+\beta^{2}=1$.
The basis is known not to be perfectly distinguished by LOCC \cite{Ghosh-2002}
and cannot be distinguished by separable measurements either (Corollary
2). We consider the situation when the above states are equally likely.
It follows from Corollary 2 that $F_{S}\leq\alpha^{2}$. By measuring
in the computational basis, and using the decoding map: $00\rightarrow|\psi_{1}\rangle,11\rightarrow|\psi_{2}\rangle,01\rightarrow|\psi_{3}\rangle,10\rightarrow|\psi_{4}\rangle$,
one can easily compute the achievable fidelity, which comes out to
be $\alpha^{2}$. Therefore for the above set of equally likely states
$F_{S}=\alpha^{2}$.

\subsubsection*{Example 4}

Consider the following orthogonal basis of three qubits:\begin{eqnarray*}
|\psi_{1}\rangle & = & \alpha|000\rangle+\beta|111\rangle,\\
|\psi_{2}\rangle & = & \beta|000\rangle-\alpha|111\rangle,\\
|\psi_{3}\rangle & = & \alpha|001\rangle+\beta|110\rangle,\\
|\psi_{4}\rangle & = & \beta|110\rangle-\alpha|001\rangle,\\
|\psi_{5}\rangle & = & \alpha|011\rangle+\beta|100\rangle,\\
|\psi_{6}\rangle & = & \beta|011\rangle-\alpha|100\rangle,\\
|\psi_{7}\rangle & = & \alpha|010\rangle+\beta|101\rangle,\\
|\psi_{8}\rangle & = & \beta|010\rangle-\alpha|101\rangle,\end{eqnarray*}
 where $\alpha\geq\beta$ are real and satisfy $\alpha^{2}+\beta^{2}=1$.
We assume that all states are equally likely.

Let the qubits be labeled as $A,$ $B$ and $C$. The upper bound
in Corollary 2 cannot be directly applied because it holds for bipartite
systems. By inspection we see that across every bipartition (for example,
$A:BC$) each state has a maximum Schmidt coefficient of $\alpha$.
Therefore, we can apply Corollary 2 to get \begin{eqnarray*}
F_{S}\left(i:jk\right) & \leq & \alpha^{2}:i\neq j\neq k\in\left\{ A,B,C\right\} \end{eqnarray*}
 Noting that the separable fidelity in a multipartite setting is bounded
by the minimum separable fidelity across all bi-partitions, we have\begin{eqnarray*}
F_{S}\left(A:B:C\right) & \leq & \min\left\{ F_{S}\left(A:BC\right),F_{S}\left(B:AC\right),F_{S}\left(C:AB\right)\right\} \\
 & \leq & \alpha^{2}\end{eqnarray*}
 This upper bound is attainable by LOCC simply by measuring in the
computational basis and decoding with the most likely input, as in
Example 2. This succeeds with probability $\alpha^{2}$; since our
states are orthogonal, this implies that the fidelity $F_{S}=\alpha^{2}$.
The previous argument shows that this is optimal.

\subsection{Lower bound in Corollary 1 and upper bound in Corollary 3}

\subsubsection*{Example 5}

We consider distinguishing a set of states chosen from the canonical
maximally entangled basis in $\mathbb{C}^{d}\otimes\mathbb{C}^{d}$
\begin{equation}
|\Psi_{nm}\rangle=\frac{1}{\sqrt{d}}\sum_{j=0}^{d-1}e^{\frac{2\pi ijn}{d}}|j\rangle\otimes|\left(j+m\right)\mod d\rangle\label{Bell-d-general}\end{equation}
 for $n,m=0,1,\cdots,d-1$. The following facts are known: (a) any
set of $N$ orthogonal states chosen from the above set is not perfectly
distinguishable by separable measurements (therefore by LOCC) when
$N>d$; (b) the above basis can be grouped into $d$ subsets $S_{k}:k=0,...,d-1$,
where $S_{k}$ consists of the states $|\Psi_{0k}\rangle,|\Psi_{1k}\rangle,...,|\Psi_{(d-1)k}\rangle$
and such a subset can be perfectly distinguished by LOCC \cite{Ghosh-2004}.

Now construct any set of $N>d$ orthogonal states such that it contains
all states from at least one subset $S_{k}$ for some $k$. Assume
that all states are equally likely. From Corollary 1 it follows that
$F_{S}\geq d/N$. On the other hand, for any set of $N$ equally likely
orthogonal maximally entangled states in $\mathbb{C}^{d}\otimes\mathbb{C}^{d}$,
we have shown that (Corollary 3) $F_{S}\leq d/N$. Putting it all
together we have $F_{S}=d/N$. That this bound is also achieved by
LOCC follows by noting that the set contains all vectors from a perfectly
LOCC-distinguishable subset $S_{k}$ (for some $k$).

\subsection{Upper bound in Theorem 4}

\subsubsection*{Example 6}

Consider the following set of states in $\mathbb{C}^{3}\ot\mathbb{C}^{3}$.
Note that the first three are orthogonal maximally entangled states
in $\mathbb{C}^{3}\ot\mathbb{C}^{3}$, which implies that they are
locally distinguishable \cite{Nathanson-2005}. The last three states
are also orthogonal and perfectly distinguishable. \begin{eqnarray*}
|\psi_{0}\rangle & = & \frac{1}{\sqrt{3}}\left(|00\rangle+|11\rangle+|22\rangle\right),\\
|\psi_{1}\rangle & = & \frac{1}{\sqrt{3}}\left(|01\rangle+|12\rangle+|20\rangle\right),\\
|\psi_{2}\rangle & = & \frac{1}{\sqrt{3}}\left(|02\rangle+|10\rangle+|21\rangle\right),\\
|\psi_{3}\rangle & = & |00\rangle.\end{eqnarray*}
 Suppose that the three maximally entangled states each occur with
probability $p$, and the product state $\vert\psi_{3}\rangle$ occurs
with probability $q=1-3p$. Since the entangled states are perfectly
distinguishable, Theorem 1 says that the probability of successful
identification is at least $3p$. On the other hand, if we measure
in the computational basis, then we will always successfully identify
our state unless we get the result $00$. In this case, we see that
$P(\vert\psi_{0}\rangle|00)=\frac{p}{3q+p}$ and $P(\vert\psi_{3}\rangle|00)=\frac{3q}{3q+p}$.
We choose the maximum likelihood answer, which means that the optimal
separable probability of success $P_{s}(S)$ is at least $1-\min(p/3,q)$.
We claim that this error probability is optimal.

Theorem 4 requires us to sort the quantities $\{p_{i}\lambda_{i}\}$
into decreasing order. If $q\le p/3$, then we begin with the three
entangled states and get $\lambda_{0}^{-1}+\lambda_{1}^{-1}+\lambda_{2}^{-1}=9$,
which is the dimension of the space. This implies that $P_{s}\le3p=1-q$.
On the other hand, if $q>p/3$, it is the product state which maximizes
$p_{i}\lambda_{i}$ and we see that $\lambda_{3}^{-1}+\lambda_{1}^{-1}+\lambda_{2}^{-1}<9<\lambda_{3}^{-1}+\lambda_{1}^{-1}+\lambda_{2}^{-1}+\lambda_{0}^{-1}$.
In this case, Theorem 4 yields $P_{s}\le q+2p+\frac{p}{3}\left(2\right)=1-\frac{p}{3}$
since we need to use a fraction of the fourth term in our sum. In
both cases, we see that the upper bound from the theorem matches the
achievable lower bound with the computational basis.

It is a little surprising that the bound from Theorem 4 is tight in
this case, since it only makes use of the maximal Schmidt coefficient
{and} does not use the fact that the states are nonorthogonal. The
example from \cite{Horodecki-2003} shows that there is no direct
correlation between the entanglement and the probability of discrimination;
thus Theorem 4 will not give tight bounds in general. The issue of
states which are not orthogonal is raised in the next section.

\section{Nonorthogonal states: Probability of Success vs Separable Fidelity}

\label{Sec-QubitExamples}

When we attempt to distinguish quantum states which are not all orthogonal,
we face two challenges: the overlap between the states and the restriction
to separable measurements. As an illustration of this phenomenon,
we examine the following sets in $\mathbb{C}^{2}\otimes\mathbb{C}^{2}$:\begin{align*}
S_{1} & =\left\{ |\phi_{i}\rangle,i=0,1,2,3\right\} =\left\{ |01\rangle,|10\rangle,\frac{1}{\sqrt{2}}\left(|00\rangle+|11\rangle\right),\frac{1}{\sqrt{2}}\left(|00\rangle-|11\rangle\right)\right\} ,\\
S_{2} & =\left\{ |\psi_{i}\rangle,i=0,1,2,3\right\} =\left\{ |00\rangle,|11\rangle,\frac{1}{\sqrt{2}}\left(|00\rangle+|11\rangle\right),\frac{1}{\sqrt{2}}\left(|00\rangle-|11\rangle\right)\right\} .\end{align*}
 Note that $S_{1}$ is a complete orthonormal basis which can be perfectly
distinguished in the full space, while the span of $S_{2}$ is only
two-dimensional, which severely limits the mutual information between
the identity of our state and the outcome of our measurement. In each
case, we assume the four states are equally probable.

In $S_{1}$, the Schmidt coefficients are $1,1,\frac{1}{\sqrt{2}},\frac{1}{\sqrt{2}}$.
Since $1+1+\frac{1}{1/2}=4$, Theorem 4 tells us that $P_{s}(S)\le\frac{3}{4}$,
and this is achieved by measuring in the computational basis. According
to Theorem 1, the separable fidelity is also equal to $\frac{3}{4}$
since the states are orthogonal.

On the other hand, $S_{2}$ is highly dependent. We write $Q=|00\rangle\langle00|+|11\rangle\langle11|$
as the projection onto the two dimensional span of $S_{2}$ and note
that $Q=\vert\psi_{0}\rangle\langle\psi_{0}\vert+\vert\psi_{1}\rangle\langle\psi_{1}\vert=\vert\psi_{2}\rangle\langle\psi_{2}\vert+\vert\psi_{3}\rangle\langle\psi_{3}\vert$.
As a result, for any matrix $M$, $\langle\psi_{0}\vert M\vert\psi_{0}\rangle+\langle\psi_{1}\vert M\vert\psi_{1}\rangle=\langle\psi_{2}\vert M\vert\psi_{2}\rangle+\langle\psi_{3}\vert M\vert\psi_{3}\rangle=\mbox{Tr}QM$.
This implies that it is impossible to gain any information about whether
our state comes from $\{\vert\psi_{0}\ket,\vert\psi_{1}\ket\}$ or
from $\{\vert\psi_{2}\ket,\vert\psi_{3}\ket\}$, even if we are allowed
to measure across the full space:\[
P_{s}\left(\mathbb{M}\right)=\frac{1}{4}\sum_{j}\max_{i}\langle\psi_{i}|M_{j}|\psi_{i}\rangle\leq\frac{1}{4}\sum_{j}\mbox{Tr}QM_{j}=\frac{1}{4}\tr Q=\frac{1}{2}.\]
 This upper bound is attained by simply assuming that the state is
either $\vert\psi_{0}\rangle$ or $\vert\psi_{1}\rangle$ and optimally
distinguishing them. This can also be accomplished with one-way LOCC,
so in this case our separable probability is equal to the global probability:\[
P_{s}(S)=P_{s}=\frac{1}{2}.\]
 To calculate the fidelity, we introduce the following lemma, which
applies to any measurement $\mathbb{M}$ (separable or not) and is
useful when there is a linear dependence among the possible states.

\begin{lem}Given an ensemble of states, $\{p_{i},\vert\psi_{i}\rangle\}$
such that the linear span of the states $\vert\psi_{i}\rangle$ has
dimension $r$, the average fidelity of a protocol $(\mathbb{M},\mathbb{G})$
is bounded by\begin{eqnarray*}
F\left(\mathbb{M},\mathbb{G}\right) & \leq & \left\Vert \rho^{\prime}\right\Vert _{r}^{KF},\end{eqnarray*}
 where $\rho'=\sum_{i}p_{i}\vert\psi_{i}\rangle\langle\psi_{i}\vert\otimes\vert\psi_{i}\rangle\langle\psi_{i}\vert$
and $||\cdot||_{r}^{KF}$ is the Ky Fan norm and is simply the sum
of the first $r$ singular values.\end{lem}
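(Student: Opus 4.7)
The plan is to express the average fidelity as a single Hilbert--Schmidt pairing on the doubled space $\mathcal{H} \otimes \mathcal{H}$, and then bound that pairing via von Neumann's trace inequality together with two structural constraints on the ``protocol operator.'' First, expanding $F(\mathbb{M},\mathbb{G})$ and pushing the two scalar factors $\langle\psi_i|M_a|\psi_i\rangle$ and $|\langle\phi_a|\psi_i\rangle|^2$ into opposite tensor slots gives
\[F(\mathbb{M},\mathbb{G}) \;=\; \tr(N\rho'), \qquad N := \sum_a M_a \otimes |\phi_a\rangle\langle\phi_a|,\]
so the entire problem reduces to bounding this single trace inner product.

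The central technical step, and the one I expect to be the main obstacle, is to show $\|N\|_\infty \le 1$ for every valid protocol. I would prove this by fixing a unit vector $|u\rangle \in \mathcal{H}\otimes\mathcal{H}$, identifying it with a matrix $U$ of Frobenius norm one via the vectorization $|u\rangle = \sum_{j,k}u_{jk}|j\rangle \otimes |k\rangle \leftrightarrow U$, and rewriting each term as
\[\langle u|\bigl(M_a \otimes |\phi_a\rangle\langle\phi_a|\bigr)|u\rangle \;=\; \tr\!\bigl(U^\dagger M_a U \,|\overline{\phi_a}\rangle\langle\overline{\phi_a}|\bigr).\]
Since $|\overline{\phi_a}\rangle\langle\overline{\phi_a}| \le I$ and $U^\dagger M_a U \ge 0$, each such term is bounded by $\tr(U^\dagger M_a U)$; summing in $a$ and invoking completeness $\sum_a M_a = I$ yields $\langle u|N|u\rangle \le \tr\!\bigl(U^\dagger U\bigr) = \|U\|_F^2 = 1$.

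The dimensional hypothesis finally enters through the projection $P$ onto $V := \text{span}\{|\psi_i\rangle\}$, which has rank $r$. Because every $|\psi_i\rangle$ lies in $V$ we have $\rho' = (P\otimes P)\rho'(P\otimes P)$, and hence $F = \tr(\tilde N \rho')$ with $\tilde N := (P\otimes P)N(P\otimes P)$. Clearly $\|\tilde N\|_\infty \le \|N\|_\infty \le 1$, and a direct calculation gives
\[\tr \tilde N \;=\; \sum_a \tr(PM_a)\,\langle\phi_a|P|\phi_a\rangle \;\le\; \tr\!\Bigl(P\sum_a M_a\Bigr) \;=\; r,\]
so the ordered eigenvalues $\sigma_k(\tilde N)$ lie in $[0,1]$ and sum to at most $r$. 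I would then close with von Neumann's trace inequality, $F = \tr(\tilde N\rho') \le \sum_k \sigma_k(\tilde N)\,\sigma_k(\rho')$; under the two constraints just established, this linear functional of $\{\sigma_k(\tilde N)\}$ is maximized by placing weight $1$ on the top $r$ eigenvalues of $\rho'$, yielding $F \le \sum_{k=1}^r \sigma_k(\rho') = \|\rho'\|_r^{KF}$, as required.
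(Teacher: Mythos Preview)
Your proof is correct and follows essentially the same route as the paper: write $F = \tr(\rho' M)$ for a ``protocol operator'' $M$ on the doubled space, show $\|M\|_\infty \le 1$ and $\tr M \le r$ using the projection onto $\mathrm{span}\{|\psi_i\rangle\}$, and conclude via the Ky Fan variational characterization (your von Neumann inequality step). The paper's version is slightly terser---it assumes rank-one $M_a = m_a|\chi_a\rangle\langle\chi_a|$ from the outset and projects only the first tensor slot---and its operator-norm bound follows directly from $|\phi_a\rangle\langle\phi_a| \le I \Rightarrow M \le Q \otimes I$, which is simpler than your vectorization argument but gives the same result.
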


Note that if $\vert\psi_{i}\rangle$ are linearly independent, then
$||\rho'||_{r}^{KF}=\mbox{Tr}\rho'=1$, but if they are dependent,
this can give a nice bound. \begin{proof} Let $\mathbb{M}=\{m_{a}\vert\chi_{a}\rangle\langle\chi_{a}\vert\}$
and $\mathbb{G}(a)=\vert K_{a}\rangle$ and let $Q$ be the projection
onto the span of $\{\vert\psi_{i}\rangle\}$.

Define the matrix $M=\sum_{a}m_{a}Q\vert\chi_{a}\rangle\langle\chi_{a}\vert Q\otimes\vert K_{a}\rangle\langle K_{a}\vert$
and write\begin{eqnarray*}
F\left(\mathbb{M},\mathbb{G}\right) & = & \sum_{a}\sum_{i}p_{i}m_{a}\left|\langle\psi_{i}|\chi_{a}\rangle\right|^{2}\left|\langle\psi_{i}|K_{a}\rangle\right|^{2}=\mbox{Tr}\rho^{\prime}M\end{eqnarray*}
 Noting that $\mbox{Tr}M=\mbox{Tr}Q=r$ and $||M||_{\infty}\le1$,
we see that the maximal value of $\mbox{Tr}\rho'M$ is the sum of
the $r$ maximum eigenvalues of the positive semidefinite matrix $\rho'$.
This proves the lemma. \end{proof} We can apply this lemma to our
set $S_{2}$, whose span has dimension 2. While the matrix $\rho=\frac{1}{4}\sum_{i}\vert\psi_{i}\rangle\langle\psi_{i}\vert$
has an eigenvalue $\frac{1}{2}$ with multiplicity 2, $\rho'=\frac{1}{4}\sum_{i}\vert\psi_{i}\rangle\langle\psi_{i}\vert\otimes\vert\psi_{i}\rangle\langle\psi_{i}\vert$
has eigenvalues $\{\frac{1}{2},\frac{1}{4},\frac{1}{4},0\}$, which
means that $||\rho'||_{2}^{KF}=\frac{3}{4}$ and the average fidelity
$F$ is at most three quarters.

This bound can be achieved by projecting $S_{2}$ onto the computational
basis, which can be implemented locally. Thus, for the set $S_{2}$,\[
F=F_{S}=\frac{3}{4}\]
 Thus, there is no difference between global and local measurements
for $P_{s}$ and $F$ with respect to the linearly dependent set $S_{2}$.

Note that the separable fidelity is the same for $S_{1}$ and $S_{2}$
even though the success probabilities are different. Although $S_{1}$
consists of four mutually orthogonal states while the four states
of $S_{2}$ are coplanar, the separable fidelity sees the problems
as equally challenging. This highlights the fact that having states
close together makes approximate cloning easier, increasing the fidelity,
but makes state identification harder, decreasing the success probability.
For both measures, in the shift from $S_{1}$ to $S_{2}$, as the
overlap between the states $\{\vert\psi_{i}\ket\}$ grows, the gap
between separable and global measurements shrinks, which is consistent
with previously known results (such as \cite{Nathanson-2010}).

\section{Conclusions}

\label{Sec-Conclusions}

Local distinguishability of orthogonal states has been the subject
of intensive research in the last decade as it allows us to explore
foundational concepts of quantum theory and quantum information. These
include entanglement and nonlocality as well as the potentials and
limitations of LOCC protocols. In this paper we have addressed a basic
question, which is how much can be discovered about a given quantum
system using a separable measurement. We have obtained lower and upper
bounds on the separable fidelity, the optimal average fidelity based
on information obtained by a separable measurement, and have given
examples in both bipartite and multipartite settings where these bounds
are optimal. We have also shown that, if our initial states are orthogonal,
a strategy that minimizes the error probability is necessarily optimal
for separable fidelity. These general bounds are useful, as explicit
expressions for fidelity and success probability are hard to find
even in specific cases.

There remain many open problems in the area of local discrimination
and the relationship between separable and local operations. We have
established that if our set of possible states is orthogonal, then
$P_{s}(S)$ and $F_{S}$ are equal but that, in general, they diverge
with nonorthogonal states. It would be useful to quantify this complementarity
relation in the separable realm. A direct analog of Theorem 4 applying
to separable fidelity would help in this direction. There also remains
much work in understanding the gap between optimal global measurements
and optimal separable measurements in the presence of non-orthogonality,
which seems to affect global bounds faster than separable ones. Finally,
we look forward to understanding the implications of these bounds
in the asymptotic context of many copies of our multipartite systems.

\medskip{}

\noindent \textbf{Acknowledgment:} The second author is grateful to
Saint Mary's College for granting a sabbatical, during which this
work was completed.

\end{document}